\newcommand{\quot}[2]{{\raisebox{.2em}{$#1$}\left/\raisebox{-.2em}{$#2$}\right.}}
\newcommand{\interp}[1]{\left\llbracket#1\right\rrbracket}
\newcommand{\bvdots}{ \tikz[baseline, every node/.style={inner sep=0}]{ \node at (0,0){.}; \node at (0,-6pt){.}; \node at (0,6pt){.}; } }
\newcommand{\ket}[1]{|#1\rangle}
\newcommand{\bra}[1]{\langle#1|}
\newcommand{\braket}[2]{\langle#1|#2\rangle}
\newcommand{\df}{\stackrel{\scriptscriptstyle def}{=}}
\let\oldtikzfig\tikzfig
\renewcommand{\tikzfig}[1]{
	\tikzsetnextfilename{#1}
	\oldtikzfig{#1}
}
\let\oldctikzfig\ctikzfig
\renewcommand{\ctikzfig}[1]{
	\tikzsetnextfilename{#1}
	\oldctikzfig{#1}
}
\definecolor{zx_grey}{RGB}{211,211,211}
\tikzstyle{gn}=[fill=green, draw=black, shape=circle, tikzit category=ZX, tikzit fill=green, tikzit draw=black, tikzit shape=circle, inner sep=0.1em]
\tikzstyle{rn}=[fill=red, draw=black, shape=circle, tikzit fill=red, tikzit draw=black, tikzit category=ZX, tikzit shape=circle, inner sep=0.1em]
\tikzstyle{divide}=[regular polygon, regular polygon sides=3, shape border rotate=90, draw=black, fill={zx_grey}, inner sep=1.5pt, tikzit category=scal, rounded corners=0.8mm]
\tikzstyle{black}=[fill=black, draw=black, shape=circle, tikzit fill=black, tikzit draw=black, tikzit shape=circle, tikzit category=IH, inner sep=2pt]
\tikzstyle{gather}=[fill={zx_grey}, draw=black, tikzit category=scal, rounded corners=0.8mm, regular polygon, regular polygon sides=3, shape border rotate=-90, inner sep=1.5pt]
\tikzstyle{ggen}=[fill=white, draw=black, shape=rectangle, rounded corners=2mm, line width=1pt, tikzit draw=red, tikzit category=scal]
\tikzstyle{white}=[fill=white, draw=black, shape=circle, inner sep=2pt, tikzit category=IH]
\tikzstyle{mbox}=[fill=white, draw=black, rounded rectangle, rounded rectangle west arc=none, tikzit category=scal, tikzit shape=rectangle]
\tikzstyle{A}=[fill=white, shape=circle, tikzit category=scal, inner sep=1pt]
\tikzstyle{ggreen}=[fill=green, draw=black, shape=circle, tikzit category=SZX, tikzit fill=green, tikzit draw=black, line width=1pt, inner sep=0.1em]
\tikzstyle{gred}=[fill=red, draw=black, shape=circle, rounded corners=2mm, tikzit category=SZX, inner sep=0.1em, tikzit fill=red, line width=1pt]
\tikzstyle{ghad}=[minimum size=3mm, font={\scriptsize\boldmath}, shape=rectangle, inner sep=1mm, line width=1pt, outer sep=-1.5mm, scale=0.8, tikzit shape=rectangle, draw=black, fill=yellow, tikzit draw=blue]
\tikzstyle{boxm}=[fill=white, draw=black, rounded rectangle, tikzit category=scal, tikzit shape=rectangle, rounded rectangle east arc=none]
\tikzstyle{box}=[fill=white, draw=black, shape=rectangle]
\tikzstyle{had}=[fill=yellow, draw=black, shape=rectangle, tikzit category=ZX, tikzit fill=yellow, tikzit draw=black, inner sep=2.5pt]
\tikzstyle{gwhite}=[fill=white, draw=black, shape=circle, tikzit fill=white, tikzit shape=circle, line width=1 pt, inner sep=2 pt, tikzit draw=red]
\tikzstyle{gblack}=[fill=black, draw=black, shape=circle, tikzit fill=black, tikzit shape=circle, line width=1 pt, inner sep=2 pt, tikzit draw=red]
\tikzstyle{antipode}=[fill=red, draw=black, shape=rectangle, tikzit fill=red, tikzit draw=black, tikzit shape=rectangle, inner sep=2pt]
\tikzstyle{diamond}=[fill=white, draw=black, shape=diamond, inner sep=2pt]
\tikzstyle{mongr}=[fill=green, draw=green, shape=circle, inner sep=2pt]
\tikzstyle{monbl}=[fill=blue, draw=blue, shape=circle, inner sep=2pt]
\tikzstyle{bg}=[inner sep=0.7mm, minimum width=0pt, minimum height=0pt, fill=green, draw=white, very thick, shape=circle]
\tikzstyle{br}=[inner sep=0.7mm, minimum width=0pt, minimum height=0pt, fill=red, draw=white, very thick, shape=circle]
\tikzstyle{rmat}=[draw, signal, fill=red, signal to=east, signal from=west, inner sep=1pt, minimum height=6pt]
\tikzstyle{lmat}=[draw, signal, fill=red, signal to=west, signal from=east, inner sep=1pt, minimum height=6pt]
\tikzstyle{umat}=[draw, signal, fill=red, signal to=north, signal from=south, inner sep=1pt, minimum width=6pt]
\tikzstyle{dmat}=[draw, signal, fill=red, signal to=south, signal from=north, inner sep=1pt, minimum width=6pt]
\tikzstyle{box}=[shape=rectangle, text height=1.5ex, text depth=0.25ex, yshift=0.5mm, fill=white, draw=black, minimum height=5mm, yshift=-0.5mm, minimum width=5mm, font={\small}]
\tikzstyle{Z dot}=[inner sep=0mm, minimum size=2mm, shape=circle, draw=black, fill={rgb,255: red,160; green,255; blue,160}]
\tikzstyle{gdot}=[minimum size=3mm, font={\scriptsize\boldmath}, shape=rectangle, rounded corners=1.3mm, inner sep=1mm, outer sep=-1.8mm, scale=0.8, tikzit shape=circle, draw=black, fill=green, tikzit draw=blue]
\tikzstyle{X dot}=[Z dot, shape=circle, draw=black, fill={rgb,255: red,220; green,0; blue,0}]
\tikzstyle{rdot}=[minimum size=3mm, font={\scriptsize\boldmath}, shape=rectangle, rounded corners=1.3mm, inner sep=1mm, outer sep=-1.8mm, scale=0.8, tikzit shape=circle, draw=black, fill=red, tikzit draw=blue]
\tikzstyle{grdot}=[minimum size=3mm, font={\scriptsize\boldmath}, shape=rectangle, rounded corners=1.3mm, inner sep=1mm, line width=1pt, outer sep=-1.5mm, scale=0.8, tikzit shape=circle, draw=black, fill=red, tikzit draw=blue]
\tikzstyle{ggdot}=[minimum size=3mm, font={\scriptsize\boldmath}, shape=rectangle, line width=1pt, rounded corners=1.3mm, inner sep=1mm, outer sep=-1.5mm, scale=0.8, tikzit shape=circle, draw=black, fill=green, tikzit draw=blue]
\tikzstyle{arrow}=[-->]
\tikzstyle{rfarr}=[draw, signal, fill=black, signal to=east, signal from=west, inner sep=1pt, minimum height=6pt]
\tikzstyle{lfarr}=[draw, signal, fill=black, signal to=west, signal from=east, inner sep=1pt, minimum height=6pt]
\tikzstyle{ufarr}=[draw, signal, fill=black, signal to=north, signal from=south, inner sep=1pt, minimum width=6pt]
\tikzstyle{dfarr}=[draw, signal, fill=black, signal to=south, signal from=north, inner sep=1pt, minimum width=6pt]
\tikzstyle{ry}=[draw, signal, fill=yellow, signal to=east, signal from=west, inner sep=1pt, minimum height=6pt]
\tikzstyle{ly}=[draw, signal, fill=yellow, signal to=west, signal from=east, inner sep=1pt, minimum height=6pt]
\tikzstyle{uy}=[draw, signal, fill=yellow, signal to=north, signal from=south, inner sep=1pt, minimum width=6pt]
\tikzstyle{dy}=[draw, signal, fill=yellow, signal to=south, signal from=north, inner sep=1pt, minimum width=6pt]
\tikzstyle{arrow}=[->]
\tikzstyle{very thick}=[-, line width=1pt, tikzit draw=red]
\tikzstyle{pointille}=[dashed, -]
\tikzstyle{red}=[-, draw=red]
\tikzstyle{blue}=[-, draw=blue]
\tikzstyle{green}=[-, draw=green]
\tikzstyle{strike}=[-, tikzit draw={rgb,255: red,191; green,0; blue,64}, strike through]
\tikzstyle{strike'}=[-, tikzit draw=cyan, strike bend]
\tikzstyle{dashed arrow}=[->, tikzit draw=green, draw=black, dashed]
\tikzstyle{reprise}=[-, line width=2pt, tikzit draw={rgb,255: red,255; green,191; blue,191}]
\newtheorem{lemma}{Lemma}
\title{A note on diagonal gates in SZX-calculus}
\author{Titouan Carette\\ Université de Lorraine, CNRS, Inria, LORIA, F 54000 Nancy, France\\ titouan.carette@loria.fr}
\begin{document}

\maketitle

This note describes how the the scalable ZXH calculus can be used to represent in a compact way the quantum gates that are diagonal in the computational basis. This includes controlled and multi-controlled Z gates, their generalizations, respectively graph and hypergraph operators, and also phase gadgets.

We don't present the different graphical languages in details. I invite the reader to report to \cite{coecke2011interacting} for ZX-calculus, \cite{backens2018zh} for ZH and \cite{carette2019szx} and \cite{carette2020colored} for the scalable notations.

However we use some unusual notations and conventions that are presented in the first section. In particular we use the well-tempered normalization of \cite{de2020well} and an unconvetional way to index the phases.

The exact interpretations of the generators are given in the first section, such that any suspicious reader can check by hand the soundness of the rewriting steps. We then introduce function arrows and use them to improve the representation of graph-states of \cite{carette2019szx}. We also gives a compact representation of hypergraph-states and phase gadgets. Finally we rephrase and extend a little bit the results of \cite{de2020fast} and use an algebraic technique to check the spider nest indentity of \cite{de2020fast}.

\tableofcontents

\section{Notations and semantics}

The ZX-calculus was first introduced for wires of size one in \cite{coecke2011interacting}. We use the extension to big wires called scalable ZX-calculus\cite{carette2019szx}. Our type system keeps track of the organisation of qubits into registers. We denote the type of a register of $n$ qubits $[n]$. Our types are defined by: $a,b\df [n], n\in \mathbb{N} | a \otimes b$. We write $[m]^n$ for the tensor product of $n$ registers of size $m$ with the convention $[m]^0=[0]$ the empty register. To each type we associate a Hilbert space: $\interp{[n]}\df(\mathbb{C}^2)^{\otimes n}$ and $\interp{a\otimes b}\df \interp{a}\otimes\interp{b}$ with the convention $\interp{[0]}=(\mathbb{C}^2)^{\otimes 0}=\mathbb{C}$. A map of quantum registers $f:a\to b$ is a linear map between the corresponding Hilbert spaces together with a reorganisation of the qubit registers. So in the interpretation we also keep track of the data organisation: $\interp{f:a\to b}\df \left( f: \interp{a}\to \interp{b} ,a,b \right)$. We use the contracted notation: $\interp{f:a\to b}= f : a \to b $ where $f$ is in fact typed $\interp{a}\to \interp{b}$. A register of type $[n]$ is represented by a wire of size $n$. Wires have interpretation:

\begin{center}
	\begin{tabular}{ll}
		
		$\interp{\tikzfig{n}}\df \sum\limits_{x\in \textbf{2}^n}\ket{x}\bra{x}:[n]\to [n]$ & $\interp{\tikzfig{swap}}\df \sum\limits_{x\in \textbf{2}^n,y\in \textbf{2}^m}\ket{yx}\bra{xy}:[n]\otimes [m]\to [m]\otimes [n]$\\
		&\\
		$\interp{\tikzfig{cup}}\df \sum\limits_{x\in \textbf{2}^n}\ket{xx}:[0]\to [n]\otimes [n]$ & $\interp{\tikzfig{cap}}\df \sum\limits_{x\in \textbf{2}^n}\bra{xx}: [n]\otimes [n]\to [0]$
		
	\end{tabular}
\end{center}

The dividers and gatherers have interpretation:

\begin{center}
	\begin{tabular}{cc}
		
		$\interp{\tikzfig{ddd}}\df \sum\limits_{x\in \textbf{2}^{n+m}}\ket{x}\bra{x}:[n+m] \to [n]\otimes [m]$ & $\interp{\tikzfig{ggg}}\df \sum\limits_{\textbf{x}\in \textbf{2}^{n+m}}\ket{x}\bra{x}: [n]\otimes [m]\to[n+m]$
		
	\end{tabular}
\end{center}

We use generators from ZX and ZH calculus with the well tempered normalization of \cite{de2020well}. Since we are not concerned with completness here, we often write directly the scalars instead of their diagramatic representation. The green and red famillies of spiders and the yellow familly of harvestmen are indexed by a phase vector $a\in \mathbb{R}^k$ and depicted respectively:

\begin{center}
	\begin{tabular}{l}
		$\interp{\tikzfig{z}}\df 2^{k\frac{n+m-2}{4}}\sum\limits_{x\in \textbf{2}^k} e^{i\pi (x\cdot a)}\ket{x}^{\otimes n}\!\bra{x}^{\otimes m}: [k]^n\to [k]^m$\\
		\\
		$\interp{\tikzfig{x}}\df 2^{k\frac{2-n-m}{4}}\sum\limits_{x_i \in \textbf{2}^k} \prod\limits_{j=1}^{k}\frac{1+e^{i\pi \left(a_j + \sum\limits_{i=1}^{n+m}x_{i,j}\right)}}{2}\ket{x_1 \cdots x_n}\!\bra{x_{n+1} \cdots x_{n+m}}: [k]^n\to [k]^m$\\
		\\
		$\interp{\tikzfig{h}}\df 2^{-k\frac{n+m}{4}}\sum\limits_{x_i\in \textbf{2}^k} \prod\limits_{j=1}^{k} (1-2e^{i\pi a_j})^{\bigwedge\limits_{i=1}^{n+m} x_i}\ket{x_1 \cdots x_n}\!\bra{x_{n+1} \cdots x_{n+m}}: [k]^n\to [k]^m$
	\end{tabular}
\end{center}

Phase vectors have in fact values in $\left(\quot{\mathbb{R}}{2\mathbb{Z}}\right)^k$. Two phase vector equal modulo $2$ lead to the same semantics.

\section{Function arrows}

Given a boolean function $f:\textbf{2}^n \to \textbf{2}^m$ we naturally turn it into a linear map $\mathbb{C}^{\textbf{2}^n} \to \mathbb{C}^{\textbf{2}^m}$ and define a \textbf{function arrow}.

\begin{center}
	$\interp{\tikzfig{farrow}}=\ket{x}\mapsto 2^{\frac{m-n}{4}}\ket{f(x)}$
\end{center}

Any function arrow satisfies: $\tikzfig{fapply0}=\tikzfig{fapply1}$ ,$\tikzfig{fgerase0}=\tikzfig{fgerase1}$ and $\tikzfig{fgcopy0}=\tikzfig{fgcopy1}$.

Furthtermore: $\left(\tikzfig{farrow}\right)^\dagger=\tikzfig{dagfarrow}$.

\subsection{Matrix arrows}

Some famillies of function arrows enjoy additional properties that translate into graphical equations. In this subsection we detail two of them which can both be indexed by $\{0,1\}$-matrices.

\subsubsection{Red matrix arrow}

A function $f:2^n \to 2^m$ can be seen as a map $f:\mathbb{F}_2^n \to \mathbb{F}_2^m$, if this map is $\mathbb{F}_2$-linear then it can be described by a matrix $A\in \mathcal{M}_{m\times n}\left(\mathbb{F}_2\right)$. The \textbf{red matrix arrows} indexed by $A$ is then defined by $\tikzfig{rarrow}\df \tikzfig{farrow}$. Those arrows have been extensively studied in \cite{carette2019szx} and \cite{carette2020colored}. We recall here the main properties of red matrix arrows. Being $\mathbb{F}_2$-linear translates to:

\begin{center}
	$\tikzfig{Arcoerase0}=\tikzfig{Arcoerase1}\quad$ and $\quad\tikzfig{Arcocopy0}=\tikzfig{Arcocopy1}$.
\end{center}

They interact with the dividers and gatherers as:

\begin{center}
	$\tikzfig{rowred0}=\tikzfig{rowred1}$ and $\tikzfig{colred0}=\tikzfig{colred1}$.
\end{center}

Most of the properies of red matrix arrows can be sumed up into one meta rule:

\begin{center}
$\tikzfig{span0}=\tikzfig{span1} \Leftrightarrow \quad Im\begin{pmatrix}
C\\D
\end{pmatrix}=Ker\begin{pmatrix}
A&B
\end{pmatrix}$
\end{center}

With $k\df dim\left(Ker\begin{pmatrix}
C\\D
\end{pmatrix}\right)$ and $h\df dim\left(coKer\begin{pmatrix}
A&B
\end{pmatrix}\right)$.

Red matrix arrows were the first motivation to the definition of scalable notations in \cite{chancellor2016graphical}, where they are applied to the design error correcting codes. The connection with codes is partially explained by the following result.

\begin{lemma}
	$C$-$Not$ circuits are equivalent to invertible red matrix arrows.
\end{lemma}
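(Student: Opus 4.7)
The plan is to prove both inclusions by reducing to the matrix generators of $GL_n(\mathbb{F}_2)$.

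For the forward direction, I would first check semantically that a single $C$-$Not$ gate realises the $\mathbb{F}_2$-linear map $(x_1,x_2)\mapsto(x_1,x_1\oplus x_2)$, whose matrix $\left(\begin{smallmatrix}1 & 0\\ 1 & 1\end{smallmatrix}\right)$ is its own inverse in $GL_2(\mathbb{F}_2)$. Identity wires and swaps similarly correspond to the identity matrix and to permutation matrices, all invertible. Then the composition properties of red matrix arrows recalled just above the lemma (serial composition multiplies matrices, tensor product gives a block-diagonal sum) let me assemble any $C$-$Not$ circuit into a single red matrix arrow whose underlying matrix is a product of invertible matrices over $\mathbb{F}_2$, and hence itself lies in $GL_n(\mathbb{F}_2)$.

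For the converse, I would invoke the classical fact that $GL_n(\mathbb{F}_2)$ is generated by the elementary transvections $I+E_{ij}$ with $i\ne j$, a standard consequence of Gaussian elimination over $\mathbb{F}_2$. Each such transvection is precisely the matrix associated to a $C$-$Not$ between qubits $i$ and $j$, tensored with identity wires elsewhere. Hence given any invertible red matrix arrow with matrix $A\in GL_n(\mathbb{F}_2)$, a factorisation $A=E_1\cdots E_r$ into transvections translates directly, via the composition rule, into a $C$-$Not$ circuit realising the same red matrix arrow.

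The main obstacle I anticipate is the bookkeeping between the two native presentations: a red matrix arrow attached to an $n\times n$ matrix lives between registers of type $[1]^n$, whereas $C$-$Not$ circuits are usually drawn on a single wire of type $[n]$. Matching the two requires inserting the dividers and gatherers of the scalable notation and appealing to the row and column rewrite rules stated above. The normalisation factor $2^{(m-n)/4}$ in the definition of a function arrow is harmless since $m=n$ throughout, so no spurious scalars appear. The genuinely non-trivial ingredient is the factorisation into transvections, which I would treat as a classical linear-algebra input rather than re-prove diagrammatically.
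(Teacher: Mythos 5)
Your proposal is correct and follows essentially the same route as the paper: identify a single $C$-$Not$ with a transvection matrix and invoke the classical fact that transvections generate $GL_n(\mathbb{F}_2)$ (equivalently $SL_n$, which coincides with $GL_n$ in characteristic two). You simply spell out both inclusions and the divider/gatherer bookkeeping more explicitly than the paper's two-sentence argument does.
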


\begin{proof}
	A C-Not gate between two qubits of a register corresponds to a transvection matrix arrows. Those matrices span the special linear group which is, in caracteristic two, the same as the linear group of invertible matrix arrows.
\end{proof}

\subsubsection{Yellow matrix arrow}

As noted in \cite{carette2020colored}, the possibility to index arrows by matrices is linked to a bialgebra structure. If the red/green bialgebra leads to red matrix arrows, the yellow/green bialgebra gives us another familly of matrix arrows over the boolean semi-ring $\mathbb{B}\df \left(\{0,1\}, \land, \lor\right)$. A function $f:2^n \to 2^m$ can be seen as a map $f:\mathbb{B}^n \to \mathbb{B}^m$, if this map is a homorphism of $\mathbb{B}$-semi module, that is $f(a\land b)=f(a)\land f(b)$ and $f(1)=1$, then it can be described by a matrix $A\in \mathcal{M}_{m\times n}\left(\mathbb{B}\right)$. The \textbf{yellow matrix arrow} indexed by $A$ is then defined by $\tikzfig{yarrow}\df \tikzfig{farrow}$. Being a homorphism of $\mathbb{B}$-semi module translates to:

\begin{center}
	$\tikzfig{yArcoerase0}=\tikzfig{yArcoerase1}$ and $\tikzfig{yArcocopy0}=\tikzfig{yArcocopy1}$.
\end{center}

Yellow matrix have less properties than red ones. However we still have: 

\begin{center}
	$\tikzfig{rowyel0}=\tikzfig{rowyel1}$ and $\tikzfig{colyel0}=\tikzfig{colyel1}$.
\end{center}

\section{Diagonal gates}

A \textbf{diagonal gate} is a unitary map that is diagonal in the computational basis. In other words, it is a linear map $\mathbb{C}^{2^n} \to \mathbb{C}^{2^n}$ such that for each $x\in \textbf{2}^n$, $\ket{x}$ are eigenvectors with module one eigenvalue. The diagonal gates on $n$ qubits form an abelian group isomorphic to $\mathbb{U}^n$. A \textbf{phase function} is a semi-boolean function $f:2^n\to \mathbb{R}$. To each phase function we can associate a diagonal gate $e^{i\pi f}:[n]\to [n]$ defined by $e^{i\pi f}: \ket{x}\mapsto e^{i\pi f(x)}\ket{x}$. The correspondence is not one to one, $e^{i\pi f}=e^{i\pi g}$ if and only if there is a $k\in \mathbb{N}$ such that $f=g+2k$. We have $e^{i\pi (f+g)}=e^{i\pi f}e^{i\pi g}$. Graphically, being a diagonal gate is equivalent to being a phase of the scaled green spider. That is to be a unitary satisfying : $\tikzfig{phase0}=\tikzfig{phase1}$. We can represent graphically any diagonal gate with function arrows and phase functions. We define the \textbf{set function} $h_n:\textbf{2}^n\to \textbf{2}^{\textbf{2}^n}$ as: $\forall x,s\in \textbf{2}^n, h(x)_s\df \delta_{x=s}$. We can also define $h_n$ inductively by $h_0()=1$ and $h_{n+1}(x_0x')=\begin{cases}
h_n(x')0\cdots 0 \text{ if } x_0=0 \\
0\cdots 0h_n(x') \text{ if } x_0=1 
\end{cases}$.

Denoting $f$ the vector $(f(x))_{x\in2^n}$, $e^{i\pi f}$ is pictured: $\tikzfig{phase2}$. It has clearly the form of a phase. Moreover, any map of this form is unitary:

\begin{center}
	$\tikzfig{uphase0}=\tikzfig{uphase1}=\tikzfig{uphase2}=\tikzfig{uphase3}=\tikzfig{uphase4}=\tikzfig{uphase5}$.
\end{center}

We can check it is a correct representation of $e^{2\pi f}$:

\begin{center}
	$\tikzfig{dphase0}=\tikzfig{dphase1}=\tikzfig{dphase2}=\tikzfig{dphase3}=\tikzfig{dphase4}$
\end{center}

Defining $f_0(x)\df f(0x)$ and $f_1(x)\df f(1x)$ we have:
\begin{center}
	
	$\tikzfig{flem3a}=\tikzfig{flem2a}$
\end{center}

We can see that:

\begin{center}
	$\tikzfig{flem8}=\tikzfig{flem9}=\tikzfig{flem10}=\tikzfig{flem11}=\tikzfig{flem12}$
\end{center}

and

\begin{center}
	$\tikzfig{flem18}=\tikzfig{flem19}=\tikzfig{flem20}=\tikzfig{flem21}=\tikzfig{flem22}$
\end{center}

We now focus on famillies of diagonal gates that admit specific representations.

\section{Graph operator}

Graph-states were among the first example of application of the scalable notation in \cite{carette2019szx}. We provide here a much nicer representation. A C-Z gate has interpretation $\scalebox{0.5}{$\begin{pmatrix}
	1&0&0&0\\0&1&0&0\\0&0&1&0\\0&0&0&-1
	\end{pmatrix}$}$ and graphical representation $\tikzfig{czgt}$. A composition of C-Z gates on $n$ qubits is called a \textbf{graph-state operator}. In fact given a graph $\left(V,E\right)$ with $|V|=n$, the associated graph operator $G:\mathbb{C}^{2^n}\to \mathbb{C}^{2^n}$ is defined as the composition of the C-Z gates on the qubits $i$ and $j$ for each $(i,j)\in E$. Usually we define the graph-state $G \ket{+}^{\otimes n}$ instead of the graph operator. See \cite{hein2006entanglement} for more informations on graph-states.

\subsection{Graphical representation}

If $G$ is a bipartite graph $\left(V_0,V_1,E\right)$ with $|V_0|=a$ and $|V_1|=b$ then we can write the graph operator $\tikzfig{bgraph}$, where $\Gamma$ is the biadjacency matrix of $G$ defined by $\Gamma_{i,j} = \delta_{(i,j)\in E}$. Here the red matrix arrow is applying C-Nots that Hadamards gates turn into C-Z as expected. Given a non-bipartite graphs $\left(V,E\right)$, we build a bipartite graph $\left(V,V,E'\right)$. We fix an ordering of the vertices in $V$ and define: $E'\df \{ (i,j), i,j\in V, i>j\text{ and }(i,j)\in E\}$. The ordering ensures that each edge appears only once. Then the biadjacency matrix $\Gamma$ is upper triangular and satisfies $\Gamma+\Gamma^t= A$, the adjacency matrix of $G$. Thus we call $\Gamma$ the \textbf{half adjacency matrix} of $G$. We then fuse together the copies of the same vertex in the bipartite graph operator with green nodes: 
\vspace{-0.5cm}
\begin{center}
	 $\tikzfig{bgraph0}=\tikzfig{bgraph1}=\tikzfig{bgraph2}$.
\end{center}

We recognize the typical form of a diagonal operator.

We denote graph state operators $\tikzfig{gbox}\df \tikzfig{bgraph2}$. We have:

\begin{center}
	$\tikzfig{compgraph0}=\tikzfig{compgraph1}$ and $\tikzfig{tensgraph0}=\tikzfig{tensgraph1}$
\end{center}

We can now provide graphical versions of the properties of graph operators.

\subsection{Stabilizer}

The graph state operator corresponding to the graph $\left(V,E\right)$ satisfies $G\ket{0}=\ket{0}$:

\begin{center}
	$\tikzfig{ngraph0}=\tikzfig{ngraph1}=\tikzfig{ngraph2}=\tikzfig{ngraph3}=\tikzfig{ngraph4}$
\end{center}

and for each vertex $i\in V$, $G\circ X_i =X_i \circ Z_{(\Gamma +\Gamma^t) i} \circ  G$:

\begin{center}
	$\tikzfig{egraph0}=\tikzfig{egraph1}=\tikzfig{egraph2}=\tikzfig{egraph3}=\tikzfig{egraph4}=\tikzfig{egraph5}$.
\end{center}

\subsection{Local complementation}

Graph states can be modified by applying phase gates locally on the vertices. More precisely we have $X_u(-\frac{\pi}{2}) \circ Z_{N_u}(\frac{\pi}{2})\circ G\ket{+}=(G*u)\ket{+}$ where $G*u$ is the graph $G$ locally complemented in $u$. Our proof is a scalable reformulation in scalable notations of the one of \cite{duncan2009graph}. We denote $T$ the half adjency matrix of the complete graph. We assume that the vertex in $G$ are ordered such that we first have $u$, then the neighborhood of $u$ and then the other vertices. Denoting $+$ the $\frac{1}{2}$ phase and $-$ the $-\frac{1}{2}$ phase we then want to prove:

\begin{center}
	$\tikzfig{lcgraph0}=\tikzfig{lcgraph1}$
\end{center}

First we consider the case of star graphs. In this situation the half adjacency matrix that we call $S_d$ has the form: $\begin{pmatrix}
\begin{matrix}
0\\ 1 \\ \vdots \\ 1
\end{matrix}& (0)
\end{pmatrix}$. We have:
\vspace{-1.5cm}
\begin{center}
	$\tikzfig{sgraph0}=\tikzfig{sgraph1}=\tikzfig{sgraph2}=\tikzfig{sgraph3}$.
\end{center}

Admiting the triangle lemma proved in \cite{duncan2009graph}: $\tikzfig{trig0}=\tikzfig{trig1}$. We prove the generalize verison: $\tikzfig{lem0}=\tikzfig{lem1}$.

\begin{proof}
	For $n=1$, we have $\tikzfig{lem0}=\tikzfig{lem2}=\tikzfig{lem1}$. For $n=2$, we have $\tikzfig{lem0}=\tikzfig{trig0}$ and $\tikzfig{lem1}=\tikzfig{trig1}$, this is exactly the triangle lemma. For $n\geq 3$:
	
	\begin{center}
		$\tikzfig{lem3}=\tikzfig{lem4}=\tikzfig{lem5}=\tikzfig{lem6}=\tikzfig{lem7}=\tikzfig{lem8}=\tikzfig{lem9}=\tikzfig{lem10}=\tikzfig{lem11}$.
	\end{center}
\end{proof}

Using the generalized triangle lemma we have:

\begin{center}
	$\tikzfig{stgraph0}=\tikzfig{stgraph1}=\tikzfig{stgraph2}=\tikzfig{stgraph3}=\tikzfig{stgraph4}=\tikzfig{stgraph5}$.
\end{center}

And finally:

\begin{center}
	$\tikzfig{figraph0}=\tikzfig{figraph1}=\tikzfig{figraph2}=\tikzfig{figraph3}$.
\end{center}

\section{Hypergraph operator}

When an edge in a graph operator is represented by a controlled Z gate on two vertices, an hyperedge in a hypergraph operator is represented by a multi-controlled Z gates on a subset of the vertices. They can be easily represented in $ZH$-calculus by an Hadamard node. A phase function corresponding to an hyperedge is defined by $\forall s,x\in 2^n \xi_{s}(x)\df \delta_{s\leq x}$.\\

We can depict $\xi_s$ as: $\tikzfig{phase3}$ where $s$ is the characteristic vector of the subset $s$. A composition of such gates is called a \textbf{hypergraph operator}. We can represent them compactly in scalable notations. The matrix $H_n\in \{0,1\}^{2^n} \times \{0,1\}^n$ is defined inductively by: $H_1=\begin{pmatrix}
0\\1
\end{pmatrix}$ and $H_{n+1}\df \begin{pmatrix}
\begin{matrix}
0\\\vdots\\0
\end{matrix}& H_n\\ \begin{matrix}
1\\\vdots\\1
\end{matrix}& H_n
\end{pmatrix}$.

An hypergraph operator corresponding to the hypergraph $G$ on $n$ vertices is entirely defined by the phase function $g:2^n\to \mathbb{R}$ such that $g(s)=\delta_{s\in G}$. $H_n$ is a stack of all possible $s\in 2^n$. We can then draw the hypergraph operator $G$: $\tikzfig{phase4}$. We can generalize this to any non $\{0,1\}$ phase functions $f:\textbf{2}^n \to \mathbb{R}$ of the form: $f(s)=\sum\limits_{x\in \textbf{2}^n} a_x \xi_s (x)$. We have:

\begin{center}
	$\tikzfig{phase2}=\tikzfig{phase8}$
\end{center} 

We will see later that in fact any function arrow admits a representation of this form.

\section{Phase gadgets}

Phase gadgets are diagonal gates depicted by $\tikzfig{phase10}$. They can be represented by the phase functions defined by $\forall s,x\in 2^n, \Omega_{s}(x)\df x\cdot s$. $\Omega_s$ is depicted: $\tikzfig{phase5}$. Like for hypergraphs operator we can use the set matrix to depict a composition of phase gadget. We can represent any phase function $f:\textbf{2}^n \to \mathbb{R}$ of the form: $f(s)=\sum\limits_{x\in \textbf{2}^n} a_x \Omega_s (x)$. We have:

\begin{center}
	$\tikzfig{phase2}=\tikzfig{phase9}$
\end{center} 

We now apply those representations to graphical transforms.

\section{Graphical transforms}

The graphical Fourier theory was introduced in \cite{kuijpers2019graphical}. It was then stated in a mix of bang-boxes and ellipsis. We restate it here in an ellipsis free way using scalable notations. We hope this new presentation allows to grasp more clearly the underlying phenomena. The theory can be extended to other semi-boolean transforms. We do it there with the Möbius transform.

\subsection{Walsh Fourier transform}

Defining $\chi_s(x)\df 1-2\Omega_s(x)$, the $\chi_s$ form an orthonormal basis of $\mathbb{R}^{\textbf{2}^n}$ with respect to the scalar product $\braket{f}{g}\df \frac{1}{2^n}\sum\limits_{x\in\textbf{2}^n}f(x)g(x)$. The Walsh Fourier transform of a phase function $f$ is defined by $\hat{f}\df\frac{1}{2^n}\sum\limits_{s\in 2^n}f(s)$:

\begin{center}
	$f=\sum\limits_{s\in \textbf{2}^n} \hat{f}(s)\chi_s =\sum\limits_{s\in \textbf{2}^n} \hat{f}(s)\left(1-2\Omega_s\right)=\sum\limits_{s\in \textbf{2}^n} \hat{f}(s) -2\sum\limits_{s\in \textbf{2}^n} \hat{f}(s)\Omega_s=f(\emptyset)-2\sum\limits_{s\in \textbf{2}^n} \hat{f}(s)\Omega_s$
\end{center}

We can use this formula to rewrite the diagonal gate associated to $f$ as a composition of phase gadgets.

\begin{lemma}
	For all $f:\textbf{2}^n \to \mathbb{R}$
	\begin{center}
		$\tikzfig{phase2}=\tikzfig{phase6}$
	\end{center}
\end{lemma}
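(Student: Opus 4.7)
The plan is to combine the Walsh--Fourier decomposition of $f$ derived in the paragraph preceding the lemma with the phase-gadget representation established one subsection earlier. The Walsh--Fourier formula gives, modulo $2\mathbb{Z}$,
\[
f \;\equiv\; f(\emptyset)\;-\;2\sum_{s\in \textbf{2}^n} \hat{f}(s)\,\Omega_s,
\]
so $f-f(\emptyset)$ is exactly of the form $\sum_{x\in \textbf{2}^n} a_x\,\Omega_s(x)$ with coefficients $a_x = -2\hat{f}(x)$. This is the precise shape of phase function for which the previous section already supplies a diagrammatic normal form.

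First I would use $e^{i\pi(g+g')}=e^{i\pi g}e^{i\pi g'}$ to split $e^{i\pi f}$ into the scalar factor $e^{i\pi f(\emptyset)}$ and the diagonal gate associated to $-2\sum_s \hat{f}(s)\Omega_s$. Second, I would apply the phase-gadget identity $\tikzfig{phase2}=\tikzfig{phase9}$ from the phase gadgets section to the remaining piece, instantiating the vector $(a_x)_{x\in\textbf{2}^n}$ as $(-2\hat{f}(x))_{x\in\textbf{2}^n}$; this replaces the set function $h_n$ on the left by the set matrix feeding a stack of phase gadgets with phases $-2\hat{f}(s)$, which is precisely the right-hand side $\tikzfig{phase6}$. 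Third, the constant contribution $e^{i\pi f(\emptyset)}$ is absorbed either as an explicit scalar or, equivalently, as the $s=\emptyset$ row of the set matrix (on which $\Omega_\emptyset\equiv 0$ so the gadget collapses to a scalar), depending on how \texttt{phase6} is drawn.

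The only real obstacle is bookkeeping: the scaled spiders and the function arrow for $h_n$ each carry normalization factors of the form $2^{k(n+m-2)/4}$ and $2^{(m-n)/4}$, so I need to check that the $2^n$ appearing in the denominator of $\hat{f}$ and the factor of $-2$ in front of $\Omega_s$ are consistent with the well-tempered normalization conventions of Section~1. Once the scalars are verified, the diagrammatic equality follows directly from the two earlier equations without any further rewriting.
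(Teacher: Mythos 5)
Your argument is sound, but it is not the proof the paper gives, and the difference is deliberate. You take the algebraic route: expand $f$ in the Walsh basis, note that $f-f(\emptyset)$ has exactly the form $\sum_{x}a_x\Omega_{(\cdot)}(x)$ with $a_x=-2\hat f(x)$, and then instantiate the phase-gadget representation \texttt{phase2}$=$\texttt{phase9} from the preceding section, which is essentially a pointwise check on computational basis states. The paper instead announces that it ``provides a graphical proof'': it proceeds by induction on the number of qubits, splitting $f$ into $f_0(x)=f(0x)$ and $f_1(x)=f(1x)$, extracting the recursion $\widehat{f_0}=\hat f_0+\hat f_1$ and $\widehat{f_1}=\hat f_0-\hat f_1$ from $W^{\otimes n+1}=W\otimes W^{\otimes n}$, and carrying out the induction step entirely by diagram rewriting inside the calculus. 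Your version is shorter and makes the role of the Fourier coefficients transparent, but it lives at the semantic level and leans on \texttt{phase2}$=$\texttt{phase9}, which the paper asserts without derivation; the paper's version buys a derivation that never leaves the graphical language, which is the point of the exercise. One concrete correction to your third step: the $s=\emptyset$ row of the set matrix cannot absorb the constant, because a phase gadget over the empty set acts on every basis state as $e^{i\pi a\,\Omega_\emptyset(x)}=e^{0}=1$, so it is trivial rather than equal to the scalar $e^{i\pi f(\emptyset)}$; the constant term genuinely has to appear as an explicit floating scalar, consistent with how the paper treats the analogous $\tilde G(0)$ coefficient in the spider-nest section.
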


We provide a graphical proof.

\begin{proof}

Denoting $f_0(x)\df f(0x)$ and $f_1(x)\df f(1x)$. Let the Walsh matrix be $W\df \frac{1}{2}\begin{pmatrix}
1&1\\1&-1
\end{pmatrix}$. The Walsh Fourier transform of a phase function $f:2^n\to \mathbb{R}$ is $\hat{f}\df W^{\otimes n} f$.

\begin{center}
	$\begin{pmatrix}
	\hat{f}_0 \\
	\hat{f}_1
	\end{pmatrix}=W^{\otimes n+1} \begin{pmatrix}
	f_0 \\
	f_1
	\end{pmatrix}=\frac{1}{2}\begin{pmatrix}
	I_n& I_n\\ I_n & -I_n
	\end{pmatrix}\begin{pmatrix}
	W^{\otimes n}f_0 \\
	W^{\otimes n}f_1
	\end{pmatrix}=\frac{1}{2}\begin{pmatrix}
	I_n& I_n\\ I_n & -I_n
	\end{pmatrix}\begin{pmatrix}
	\widehat{f_0} \\
	\widehat{f_1}
	\end{pmatrix}=\begin{pmatrix}
	\frac{\widehat{f_0}+\widehat{f_1}}{2} \\
	\frac{\widehat{f_0}-\widehat{f_1}}{2}
	\end{pmatrix}$
\end{center}

So:

\begin{center}
	\begin{tabular}{ccccccc}
		$\hat{f}_0=\frac{\widehat{f_0}+\widehat{f_1}}{2}$&& $\hat{f}_1=\frac{\widehat{f_0}-\widehat{f_1}}{2}$
		&&
		$\widehat{f_0}=\hat{f}_0+\hat{f}_1$ && $\widehat{f_1}=\hat{f}_0 -\hat{f}_1$
	\end{tabular}
\end{center}

	By induction, for $n=1$ this is direct, for $n\geq 2$: We show $\tikzfig{flem1}=\tikzfig{flem0}$  and  $\tikzfig{flem3}=\tikzfig{flem2}$.
	
	We have: $\tikzfig{flem1a}=\tikzfig{flem0a}$ so:
	
	\begin{center}
		$\tikzfig{flem4}=\tikzfig{flem5}=\tikzfig{flem6}=\tikzfig{flem7}$
	\end{center}

	and

	\begin{center}
		$\tikzfig{flem13}=\tikzfig{flem14}=\tikzfig{flem15}=\tikzfig{flem16}=\tikzfig{flem17}$
	\end{center}

\end{proof}

This equality has been proved in \cite{kuijpers2019graphical}.

\subsection{Möbius transform}

The $\xi_s$ also form a basis of $\mathbb{R}^{\textbf{2}^n}$ associated with the Möbius transform, see \cite{grabisch2016bases} for details. The Möbius transform of a phase function $f$ is defined by $\tilde{f}(x)=\sum\limits_{s\leq x}(-1)^{|x|+|s|}f(s)$. We have:

\begin{center}
	$f=\sum\limits_{s\in 2^n} \tilde{f}(s)\xi_s$
\end{center}

\begin{lemma}
	For all $f:\textbf{2}\to \mathbb{R}$:
	
	\begin{center}
		$\tikzfig{phase2}=\tikzfig{phase7}$
	\end{center}
\end{lemma}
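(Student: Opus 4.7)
The plan is to mirror the inductive strategy of the Walsh--Fourier lemma, but driven by the recurrence that the M\"obius transform satisfies instead of the Walsh one. Writing $f_0(x)\df f(0x)$ and $f_1(x)\df f(1x)$ and splitting the defining sum $\tilde{f}(x)=\sum_{s\leq x}(-1)^{|x|+|s|}f(s)$ on the first bit, a direct calculation gives
\[
\tilde{f}(0x)=\widetilde{f_0}(x)\quad\text{and}\quad\tilde{f}(1x)=\widetilde{f_1}(x)-\widetilde{f_0}(x).
\]
This is exactly the block pattern recorded by the inductive definition of $H_{n+1}$: the upper block has a zero first column followed by $H_n$, and the lower block has a one first column followed by $H_n$. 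So, unlike the Walsh case where the block structure is $\pm I$ and leads to a symmetric manipulation, here we get an upper-triangular block pattern that matches $H_{n+1}$ essentially verbatim.

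The base case $n=1$ reduces to $\tilde{f}(0)=f(0)$ and $\tilde{f}(1)=f(1)-f(0)$: the product of the two hyperedge operators $\xi_{\emptyset}$ and $\xi_{\{0\}}$ weighted by these phases collapses onto the single green spider carrying the phase vector $f$, which can be checked directly on the two computational basis states. For the inductive step, I would first split the first register off the green spider of the left-hand diagram using the splitting identity already established just before the Walsh lemma, i.e.\ the relation between $e^{i\pi f}$ and the two controlled diagonal gates $e^{i\pi f_0}, e^{i\pi f_1}$ through the first qubit. Applying the inductive hypothesis to each of the two branches then produces hypergraph operators indexed by $\widetilde{f_0}$ and $\widetilde{f_1}$ on the remaining $n$ qubits, still controlled by the first qubit.

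The recombination step is where the recurrence $\tilde{f}(1x)=\widetilde{f_1}(x)-\widetilde{f_0}(x)$ must be realised graphically. My strategy is to insert a $\widetilde{f_0}$-indexed hypergraph operator and its inverse on the ``$1$''-branch: one copy is absorbed into the ``$0$''-branch so that both branches share a common $\widetilde{f_0}$ layer (producing the $H_n$ that appears twice in the inductive definition of $H_{n+1}$), while the residual $\widetilde{f_1}-\widetilde{f_0}$ is gathered into the new ``first column'' part of $H_{n+1}$. All these moves are applications of the yellow row/column rules for matrix arrows and of the yellow/green bialgebra recalled in the section on function arrows; once the two controlled copies are fused along the first qubit, the resulting shape is literally the scalable notation for a hypergraph operator with matrix $H_{n+1}$ and phase vector $\tilde{f}$.

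The main obstacle I expect is exactly this asymmetric, triangular character of the M\"obius recurrence. In the Walsh proof the involutive structure of $W$ produces a manipulation that is symmetric in the two branches, whereas here the subtraction of $\widetilde{f_0}$ on the ``$1$''-branch forces one to inject a compensating copy; this bookkeeping, invisible in the Fourier case, is the real content of the proof. Once it is done cleanly, the remaining steps should be mechanical applications of the yellow row/column rules and of the explicit block form of $H_{n+1}$ in scalable notation.
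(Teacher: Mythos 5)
Your proposal is correct and follows essentially the same route as the paper: an induction on the number of qubits driven by the recurrence $\tilde{f}_0=\widetilde{f_0}$, $\tilde{f}_1=\widetilde{f_1}-\widetilde{f_0}$ (equivalently $\widetilde{f_1}=\tilde{f}_0+\tilde{f}_1$), the splitting of the diagonal gate along the first qubit established before the Walsh lemma, and the matching block structure of $H_{n+1}$. The only cosmetic differences are that you derive the recurrence by splitting the defining sum rather than via the matrix $M^{\otimes n}$, and you phrase the recombination as inserting a $\widetilde{f_0}$-operator and its inverse, which is precisely the graphical form of the paper's identity $\widetilde{f_1}=\tilde{f}_0+\tilde{f}_1$.
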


\begin{proof}
Let the Möbius matrix be $M\df \begin{pmatrix}
1&0\\-1&1
\end{pmatrix}$. The Möbius transform of a phase function $f:2^n\to \mathbb{R}$ is $\tilde{f}\df M^{\otimes n} f$.

\begin{center}
	$\begin{pmatrix}
	\tilde{f}_0 \\
	\tilde{f}_1
	\end{pmatrix}=M^{\otimes n+1} \begin{pmatrix}
	f_0 \\
	f_1
	\end{pmatrix}=\begin{pmatrix}
	I_n& 0\\ -I_n & I_n
	\end{pmatrix}\begin{pmatrix}
	M^{\otimes n}f_0 \\
	M^{\otimes n}f_1
	\end{pmatrix}=\begin{pmatrix}
	I_n& 0\\ -I_n & I_n
	\end{pmatrix}\begin{pmatrix}
	\widetilde{f_0} \\
	\widetilde{f_1}
	\end{pmatrix}=\begin{pmatrix}
	\widetilde{f_0} \\
	\widetilde{f_1}-\widetilde{f_0}
	\end{pmatrix}$
\end{center}

So:

\begin{center}
	\begin{tabular}{ccccccc}
		$\tilde{f}_0=\widetilde{f_0}$ && $\tilde{f}_1=\widetilde{f_1}-\widetilde{f_0}$&&$\widetilde{f_0}=\tilde{f}_0$ && $\widetilde{f_1}=\tilde{f}_0 + \tilde{f}_1$
	\end{tabular}
\end{center}
	By induction, for $n=1$ it is direct, for $n\geq 2$: We show $\tikzfig{mlem1}=\tikzfig{flem0}$  and  $\tikzfig{mlem3}=\tikzfig{flem2}$.
	
	We have: $\tikzfig{mlem1a}=\tikzfig{mlem0a}$ so:
	
	\begin{center}
		$\tikzfig{mlem4}=\tikzfig{mlem5}=\tikzfig{mlem6}=\tikzfig{flem7}$
	\end{center}
	
	and
	
	\begin{center}
		$\tikzfig{mlem14}=\tikzfig{mlem15}=\tikzfig{mlem16}=\tikzfig{flem17}$
	\end{center}

\end{proof}

\subsection{Spider nests}

A spider nest identity is a composition of spiderlike diagrams, typically generalized hyperedges and phase gadgets, with one big spider and a lot of very small ones. Furthermore this composition must be the identity. We end this note by deriving some of them from graphical transforms. We restrict to \textbf{symmetric phase functions}, that is $f(x)$ only depends of the Hamming weight of $x$. We write $F:\textbf{n}\to \mathbb{R}$ the function such that $f(x)=F(|x|)$.

\subsection{Binomial transform}

The Möbius transform of a symmetric semi-boolean function is the \textbf{binomial transform}: 

\begin{center}
	$\tilde{f}(x)=\sum\limits_{s\leq x} f(s)(-1)^{|s|+ |x|}=(-1)^{|x|}\sum\limits_{k=0}^{|x|}\binom{|x|}{k}(-1)^k F(k)$
\end{center}

We define $\tilde{F}(m)\df\sum\limits_{k=0}^{m}\binom{m}{k} (-1)^{m-k} F(k)$. We have $\tilde{f}(x)=\tilde{F}(|x|)$ and $F(m)=\sum\limits_{k=0}^{m}\binom{m}{k}\tilde{F}(k)$.

We recover the spider nest identity of \cite{munsonand} by computing the Mobius transform of the phase function
$G(m)=\alpha\frac{1-(-1)^m}{2}$ which represent one phase gadget on $n$ qubits with phase $\alpha$. 

\begin{center}
	$\tilde{G}(m)=\sum\limits_{k=0}^{m}\binom{m}{k} (-1)^{m-k} G(k)=\frac{\alpha}{2}\left(\sum\limits_{k=0}^{m}\binom{m}{k} (-1)^{m-k}-\sum\limits_{k=0}^{m}\binom{m}{k} (-1)^{m-k}(-1)^{k}\right)=\frac{\alpha}{2}\left(\delta_{m=0}-(-2)^m\right)$.
\end{center}

$\tilde{G}(0)=0$ so we have no floating scalars. For $k\geq 1$, $\tilde{G}(k)=\alpha(-2)^{m-1}$. Setting $\alpha=\frac{1}{4}$ we see that only the first terms $\tilde{G}(1)=\frac{1}{4}$, $\tilde{G}(2)=\frac{-1}{2}$ and $\tilde{G}(3)=1$ are relevant for the phase gate.\\

This have been proved by induction in \cite{munsonand}.

\subsection{Kravchuk transform}

The case of the Fourier transform is more complex:

\begin{center}
	$\hat{f}(x)=\frac{1}{2^n}\sum\limits_{s\in \textbf{2}^n} f(s)(-1)^{s\cdot x}=\frac{1}{2^n}\sum\limits_{k\in \textbf{n}}F(k)\sum\limits_{|s|=k}(-1)^{s\cdot x}$.
\end{center}

$\sum\limits_{|s|=k}(-1)^{s\cdot x}$ is also a symmetric boolean function equal to the \textbf{Kravchuk polynomial}: 

\begin{center}
	$\mathcal{K}^n_k(|x|)\df\sum\limits_{j=0}^{k}\binom{|x|}{j}\binom{n-|x|}{k-j}(-1)^j$.
\end{center}
 To see this, consider $j$ as the number of ones in comon between $x$ and $s$. The Kravchuk polynomials satisfy: $\binom{n}{m}\mathcal{K}_k^n(m)=\binom{n}{k}\mathcal{K}_m^n(k)$ and $\sum\limits_{i=0}^{n}\binom{n}{i}\mathcal{K}_k^n(i)\mathcal{K}_l^n(i)=2^n \binom{n}{k}\delta_{k=l}$.

The Walsh Fourier transform of a symmetric semi-boolean function is then the \textbf{Kravchuk transform}: $\hat{F}(m)\df \frac{1}{2^n}\sum\limits_{k=0}^{n}F(k)\mathcal{K}^n_k(m)$. We have: $\hat{f}(x)=\hat{F}(|x|)$ and $F(m)=\sum\limits_{k=0}^{n}\hat{F}(k)\mathcal{K}^n_k(m)$. See \cite{canteaut2005symmetric} for details on transforms of symmetric semi-boolean functions.

We can compute the transform of the phase function $H(m)\df \beta\delta_{m=n}$ which corresponds to the generalised hyperedge on $n$ qubits with phase $\beta$.

\begin{center}
	$\hat{H}(m)=\frac{1}{2^n}\sum\limits_{k=0}^{n}H(k)\mathcal{K}_k^n(m)=\frac{1}{2^n}\sum\limits_{k=0}^{n}\beta\delta_{k=n}\mathcal{K}_k^n(m)=\frac{\beta}{2^n}\mathcal{K}_n^n(m)=\frac{\beta}{2^n}(-1)^m$.
\end{center}

Combining this result with the spider nest identity of the previous section it is possible to derive the spider nest identity from \cite{de2020fast} as it is done in \cite{munsonand}. We end this note by giving an alternative proof by inversion. We first sketch a method to check spider nests identity. We want to show that for a symmetric phase function $\hat{s}:\textbf{2}^n \to \mathbb{R}$:

\begin{center}
	$\tikzfig{invk0}=\tikzfig{invk1}$
\end{center}

We know $\hat{S}$ by reading the coefficients in the phase gadgets. We compute $S$ using the inversion formula. Then we check if all values of $S$ are equal modulo $2$. If it is the case this means that the corresponding phase gadget is $e^{i\pi S(0)}I_n$. But this scalar is exactly the one appearing in the graphical Fourier transform. So simplifying on both sides gives us exactly what we want.

We apply this method to the spider nest identity of \cite{de2020fast}. Here, only the Kravchuk polynomials for $k=0,1,2,3$ and $n$ are needed:

\begin{center}
	\begin{tabular}{cccc}
		$\mathcal{K}^n_0(m)=1$&
		$\mathcal{K}^n_1(m)=-2m+n$&
		$\mathcal{K}^n_2(m)=2m^2 -2nm+\frac{n^2 -n}{2}$&$\mathcal{K}^n_n(m)=(-1)^m$\\
		&&&\\
		\multicolumn{4}{c}{$\mathcal{K}^n_3(m)=-\frac{4}{3}m^3 +2nm^2  +(-n^2 +n -\frac{2}{3})m+\frac{n^3 - 3n^2 + 2n}{6}$}
	\end{tabular}
\end{center}

We want to inverse the phase function:

\begin{center}
	$\hat{S}(0)=0$ $\hat{S}(1)=\frac{(n-2)(n-3)}{16}$ $\hat{S}(2)=-\frac{n-3}{8}$ $\hat{S}(3)=\frac{1}{8}$ $\hat{S}(n)=-\frac{1}{8}$  $\hat{S}(k)=0$ for $ k\neq 0,1,2,3,n$.
\end{center}

\begin{lemma}
	Forall $m\in \interp{0,n}$, $S(m)= S(0)\mod 2$
\end{lemma}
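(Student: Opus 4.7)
The plan is to apply the Kravchuk inversion formula $S(m)=\sum_{k=0}^{n}\hat{S}(k)\mathcal{K}^n_k(m)$ to the four nonzero values of $\hat{S}$, reduce $S(m)-S(0)$ to an $n$-independent closed form, and then verify parity case by case on $m$. Since $\hat{S}$ vanishes outside $\{1,2,3,n\}$ only four terms contribute; subtracting $S(0)$ kills the constant-in-$m$ part of each Kravchuk polynomial, leaving the polynomial tail from $k=1,2,3$ plus the oscillatory contribution $-\frac{1}{8}((-1)^m-1)$ from $k=n$.

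First I would collect powers of $m$ by substituting the explicit Kravchuk polynomials displayed just above the statement. The coefficient of $m^3$ is $\hat{S}(3)\cdot(-\tfrac{4}{3}) = -\tfrac{1}{6}$; the coefficient of $m^2$ is $\hat{S}(2)\cdot(-2)+\hat{S}(3)\cdot 2n = -\tfrac{n-3}{4}+\tfrac{n}{4} = \tfrac{3}{4}$; and a careful expansion of the linear term shows that all terms in $n^2$ and $n$ from $\hat{S}(1),\hat{S}(2),\hat{S}(3)$ cancel, leaving $-\tfrac{5}{6}$. Adding the $\pm 1$ oscillation then yields, after regrouping via $-\tfrac{m^3}{6}+\tfrac{3m^2}{4}-\tfrac{5m}{6} = -\binom{m}{3}+\tfrac{m(m-2)}{4}$, the $n$-independent expression
\[
S(m)-S(0) \;=\; -\binom{m}{3} + \frac{m(m-2)}{4} + \frac{1-(-1)^m}{8}.
\]

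The parity check then splits on $m \bmod 2$. For $m=2l$ the oscillatory term vanishes and $\tfrac{m(m-2)}{4}=l(l-1)$, so $S(m)-S(0) = -\binom{2l}{3}+l(l-1)$; here $l(l-1)$ is even as a product of consecutive integers, and $\binom{2l}{3}$ is even because $2l(2l-1)(2l-2)=4l(l-1)(2l-1)$ is divisible by $24$. For $m=2l+1$ the oscillatory correction fuses with $\tfrac{m(m-2)}{4}$ to give $l^2$ (since $\tfrac{(2l+1)(2l-1)+1}{4}=l^2$), so $S(m)-S(0) = -\binom{2l+1}{3}+l^2$; since $\binom{2l+1}{3}=\tfrac{l(4l^2-1)}{3}$ with $4l^2-1$ odd and division by $3$ parity-preserving, $\binom{2l+1}{3}$ has the same parity as $l$, matching the parity of $l^2$. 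Either way $S(m)-S(0)\in 2\mathbb{Z}$.

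The main obstacle is the bookkeeping in the coefficient of $m$: the $n^2$ and $n$ contributions from $\hat{S}(1)$, $\hat{S}(2)$ and $\hat{S}(3)$ must exactly cancel to produce the constant $-\tfrac{5}{6}$. This cancellation is the genuine content of the identity, and is precisely why the rational values $\tfrac{(n-2)(n-3)}{16}$, $-\tfrac{n-3}{8}$, $\tfrac{1}{8}$ are forced. Once that cancellation is in hand, the remaining parity verification is entirely elementary.
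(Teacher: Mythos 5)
Your proof is correct, and the front half is the same as the paper's: both apply the inversion formula $S(m)=\sum_k\hat{S}(k)\mathcal{K}^n_k(m)$, substitute the explicit Kravchuk polynomials, and isolate the $m$-dependent part $-\frac{m^3}{6}+\frac{3m^2}{4}-\frac{5m}{6}-\frac{(-1)^m}{8}$ (your coefficient computations check out; note only a transcription slip where you write $\hat{S}(2)\cdot(-2)$ for the $m^2$ term of $\mathcal{K}^n_2$, whose coefficient is $+2$ --- the evaluated value $-\frac{n-3}{4}+\frac{n}{4}=\frac{3}{4}$ is right). Where you genuinely diverge is the final parity verification. The paper proceeds by brute force: it writes $m=12k+l$, observes that all $k$-dependent terms are even integers, and checks the twelve residues $l\in\{0,\dots,11\}$ one by one. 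You instead regroup the cubic as $-\binom{m}{3}+\frac{m(m-2)}{4}$ and absorb the oscillatory term, reducing to $S(m)-S(0)=-\binom{2l}{3}+l(l-1)$ for even $m$ and $-\binom{2l+1}{3}+l^2$ for odd $m$, each visibly even by elementary divisibility (consecutive integers, and the fact that dividing by the odd number $3$ preserves parity). Your route buys a structural explanation --- integrality and parity fall out of binomial-coefficient identities rather than a residue table --- at the cost of having to find the right regrouping; the paper's mod-$12$ reduction is mechanical and generalizes without cleverness to any such rational polynomial plus oscillation. Both are complete proofs.
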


\begin{proof}

\begin{center}
	\begin{align*}
		S(m)&=\sum\limits_{k=0}^{n}\hat{S}(k)\mathcal{K}_k^n(m)=\hat{S}(1)\mathcal{K}_1^n(m)+\hat{S}(2)\mathcal{K}_2^n(m)+\hat{S}(3)\mathcal{K}_3^n(m)+\hat{S}(n)\mathcal{K}_n^n(m)\\
		&=\frac{(n-2)(n-3)}{16}\mathcal{K}_1^n(m)-\frac{n-3}{8}\mathcal{K}_2^n(m)+\frac{1}{8}\mathcal{K}_3^n(m)-\frac{1}{8}\mathcal{K}_n^n(m)\\
		&=\frac{-m^3}{6} +\frac{3m^2}{4} -\frac{5m}{6}+ \frac{n^3}{48}-\frac{n^2}{8}+\frac{11n}{48} -\frac{(-1)^m}{8}
	\end{align*}
	
\end{center}

Our goal is to check that $S(m) \mod 2$ doesn't depends on $m$. Thus, we only need the part of $S(m)$ that depends in $m$.

\begin{center}
	$S'(m)\df\frac{-m^3}{6} +\frac{3m^2}{4} -\frac{5m}{6}-\frac{(-1)^m}{8}$.
\end{center} 

Since $S'(0)=-\frac{1}{8} \mod 2$, we want to check that for each $m\in \mathbb{N}$, $S'(m)\equiv -\frac{1}{8} \mod 2$. To do so we write $m=12k+l$ with $k\in \mathbb{N}$ and $l\in \interp{0,11}$.
We obtain:

\begin{center}
	$S'(12k+l)=-288k^3 - 72lk^2 + 108k^2 - 6kl^2 + 18kl -10k- \frac{l^3}{6} +\frac{3l^2}{4}-\frac{5l}{6}-\frac{(-1)^{l}}{8}$
\end{center}

We see that $S'(12k+l) = - \frac{l^3}{6} +\frac{3l^2}{4}-\frac{5l}{6}-\frac{(-1)^{l}}{8} \mod 2$, this only depends on $l$. Thus we can just check that for each $l\in \interp{0,11}$, $- \frac{l^3}{6} +\frac{3l^2}{4}-\frac{5l}{6}-\frac{(-1)^{l}}{8}= -\frac{1}{8} \mod 2$  (which is true).

\end{proof}

\bibliography{scal.bib}

\end{document}